\begin{document}


\title{A Formal Graph Model for RDF and Its Implementation}
\numberofauthors{4} 
\author{
\alignauthor{
Vinh Nguyen\\
 \affaddr{Kno.e.sis Center}\\
 \affaddr{Wright State University}\\
 \affaddr{Ohio, USA}\\
 \email{vinh@knoesis.org}
}
\alignauthor{
Jyoti Leeka\\
 \affaddr{IIIT Delhi}\\
 \affaddr{India}\\
 \email{jyotil@iiitd.ac.in}
}\alignauthor{
Olivier Bodenreider\\
 \affaddr{National Library of Medicine}\\
 \affaddr{National Institute of Health}\\
  \affaddr{Maryland, USA}\\
 \email{olivier@nlm.nih.gov}
}
\and
\alignauthor{
Amit Sheth\\
 \affaddr{Kno.e.sis Center}\\
 \affaddr{Wright State University}\\
 \affaddr{Ohio, USA}\\
 \email{amit@knoesis.org}
}
}
\maketitle

\newtheorem{theorem}{Theorem}[section]
\newtheorem{lemma}[theorem]{Lemma}
\newtheorem{proposition}[theorem]{Proposition}
\newtheorem{corollary}[theorem]{Corollary}

\begin{abstract}
Formalizing an RDF abstract graph model to be compatible with the RDF formal semantics has remained one of the foundational problems in the Semantic Web. In this paper, we propose a new formal graph model for RDF datasets. This model allows us to express the current model-theoretic semantics in the form of a graph. We also propose the concepts of resource path and triple path as well as an algorithm for traversing the new graph. We demonstrate the feasibility of this graph model through two implementations: one is a new graph engine called GraphKE, and the other is extended from RDF-3X to show that existing systems can also benefit from this model. In order to evaluate the empirical aspect of our graph model, we choose the shortest path algorithm and implement it in the GraphKE and the RDF-3X. Our experiments on both engines for finding the shortest paths in the YAGO2S-SP dataset give decent performance in terms of execution time. The empirical results show that our graph model with well-defined semantics can be effectively implemented.

\end{abstract}

\section{Introduction}








As the adoption of Semantic Web grows, a large number of datasets are generated and made publicly available, for example, as in the Linked Open Data. Due to the graph nature of these datasets, many graph-based algorithms for RDF datasets have been developed such as query processing \cite{hartig2007sparql,Zeng2013}, graph matching \cite{brocheler2009dogma,Cheng2008}, semantic associations \cite{Anyanwu:2005:SRC:1060745.1060766,Anyanwu:2003:9EQ:775152.775249}, path computing \cite{heim2009relfinder,przyjaciel2012rdfpath}, and centrality \cite{Cheng2011,Zhang2007}. These existing graph algorithms employ the abstract graph model called Node-Labeled Arc-Node (NLAN), currently recommended by W3C \cite{cyganiak2014}. 

In this NLAN graph, the subject and object of a triple are mapped to two nodes of a graph, and the predicate is mapped to a directed labeled arc connecting the subject and the object nodes of that graph. Although these straightforward mappings work for simple triples at the instance level, challenges may arise where predicates also play the role of the subject in other triples as in Fig. \ref{disconnected-subgraphs}. 

Particularly, in the first triple, the predicate is mapped to a labeled arc connecting the two subject/object nodes of the subgraph G1 representing the first triple. When the predicate from the first triple becomes the subject or the object of the second triple, it is mapped to another node of the subgraph G2 representing the second triple. As a result, a predicate is mapped to both a labeled arc in the subgraph G1 and a node within the subgraph G2. Despite the fact that the predicate is shared between the two triples, the resulting graph is comprised of two disconnected subgraphs G1 and G2, making it impossible to traverse between the subgraphs. The \verb|Object 2| is unreachable from \verb|Subject 1|. In addition, it is unfortunate for the NLAN graph model that such a scenario is common as RDF syntax allows us to make assertions about any resources, including predicates. Here we present two of such scenarios: schema triples and singleton property triples. We will take the singleton property triples forward to motivate our work.

\begin{figure}[h!]
 \centering
 \includegraphics[width=0.30\textwidth]{./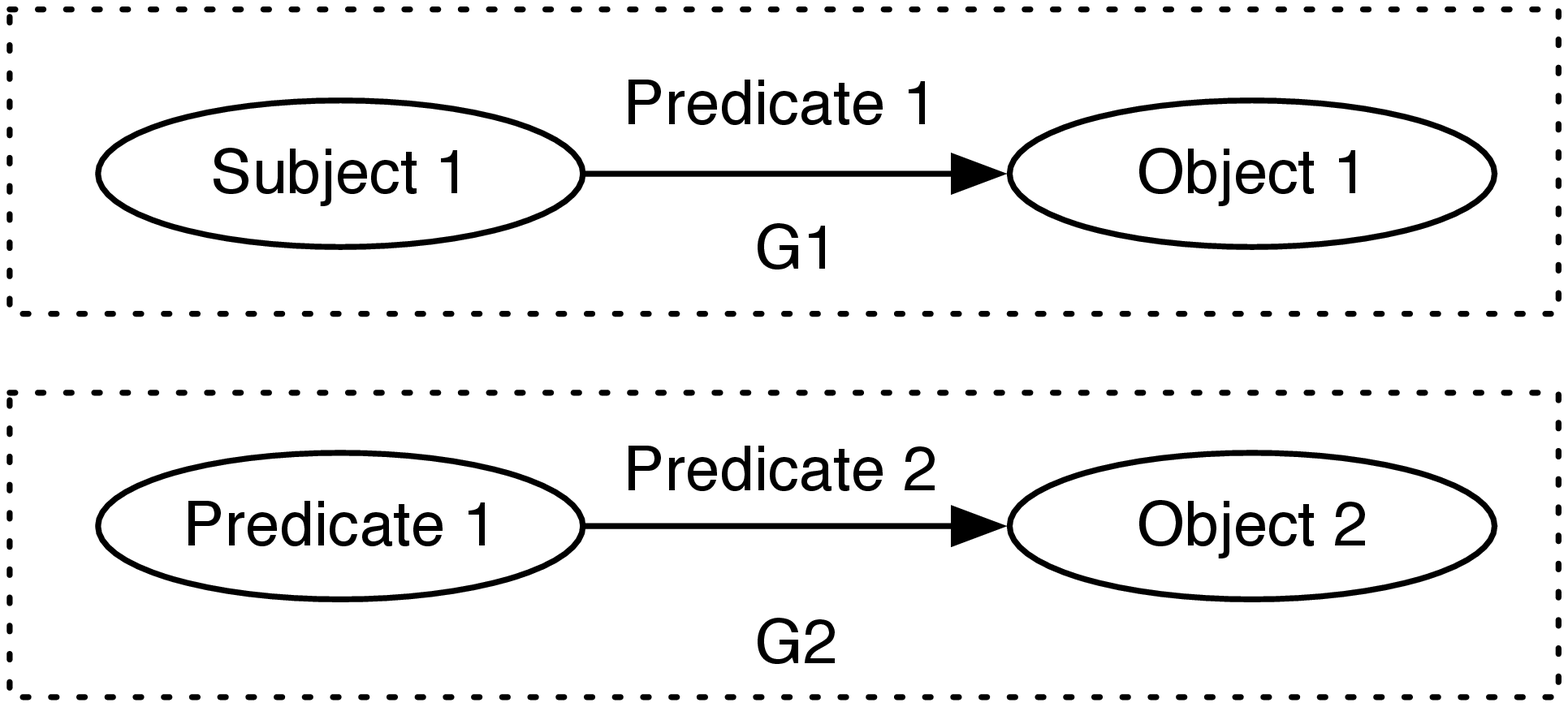}
 \caption{Subgraphs G1 and G2 are disconnected in the Node-LabeledArc-Node diagram (NLAN). \label{disconnected-subgraphs}}
\end{figure}

\textbf{Schema triples}. One common scenario where a predicate becomes the subject or object of another triple is through the RDF schema triples. A schema triple may describe a property using many other different properties such as \verb|rdfs:subPropertyOf|, \verb|rdf:type|, \verb|rdfs:domain|, or \verb|rdfs:range|. For example, the predicates \verb|hasFamilyName| and \verb|hasGivenName| are asserted as sub properties of the predicate \verb|rdfs:label|, and hence become the subjects in these triples: \\
\indent \indent \verb|hasFamilyName rdfs:subPropertyOf rdfs:label .|\\
\indent \indent \verb|hasGivenName rdfs:subPropertyOf rdfs:label .|\\
The Yago2S-SP dataset \cite{Nguyen:2014:DLR:2566486.2567973} contains 938 properties like these. We can also find schema triples in many other common datasets, such as DBPedia, etc.

\textbf{Singleton Property triples.} Another scenario comes from the RDF datasets created by a recent work \cite{Nguyen:2014:DLR:2566486.2567973} where singleton properties are used to describe RDF statements. A simple approach to make assertions about an RDF statement is through reification. Reification creates an instance of class \verb|rdf:Statement| to represent a statement and asserts metadata about this statement through this instance. Although this approach is intuitive, reifying one statement requires at least four triples, making it less attractive. Instead of reifying a statement, Nguyen et al. \cite{Nguyen:2014:DLR:2566486.2567973} propose a new approach called the singleton property. The singleton property approach was shown to be more efficient than reification. Each singleton property uniquely representing a statement can be annotated with different kinds of metadata describing that statement such as provenance, time, and location. Therefore, the singleton properties also become subjects and objects of other meta triples. Meta triples are the triples that describe other triples through the use of singleton properties. While singleton properties enrich RDF datasets with different kinds of metadata for the triples they represent, the metadata subgraph is disconnected from the triple they describe. Traversing the NLAN graphs created by this approach also becomes a challenge for the RDF graph model. 

Being unable to traverse among triples in the scenarios described above due to the limited connectivity of the NLAN graph is a critical issue, because it limits the capability of answering reachability and shortest path queries on RDF datasets. A reachability query verifies if a path exists between any two nodes in a graph. A shortest path query returns the shortest distance in terms of the number of edges between any two nodes in a graph if the path exists. In the NLAN graph model, both query types require traversing the graph from subject node to object node of a triple. Traversing from one node in subgraph G1 to another node in subgraph G2 in this way will not find any result because they do not share any node in common. The only resource these two triples share is the predicate of the first triple, which is also the subject of the second triple. Therefore, the ability to connect the triples represented in G1 with the corresponding triples in G2 (schema triples or meta triples represented through the singleton property) is desirable. Such connectivity strengthens the robustness of the graph model. 

Next, we will take one sample set of RDF triples represented by the singleton property approach as our motivating example. We will analyze the limitations of the existing work through this motivating example and demonstrate our approach to address the challenges.

\subsection{Motivating Example}
\label{motivating}

Consider the set of triples in Table \ref{tbl-motivating-example}, which will be used as a running example throughout the paper. In this example, we represent the facts that Bill Clinton is succeeded by George W. Bush as the president of the United States, and is succeeded by Frank White as the Governor of Arkansas. If we represent the facts as follows, these relationships are unclear and incomplete because the graph does not represent the political position context in which Bill Clinton is succeeded.\\
\indent \indent \verb|BillClinton hasSuccessor GeorgeWBush .|\\
\indent \indent \verb|BillClinton hasSuccessor FrankWhite .|

Instead, we propose to use the singleton property approach to represent the facts as shown in Table \ref{tbl-motivating-example}. Indeed, our example is motivated by the facts from the Yago2S-SP dataset, which will be used in the evaluation described in Section \ref{evaluation}. We create our example in the way that makes it intuitive to readers, and for readability, we eliminate all the URI prefixes. For each political position of Bill Clinton, we create a singleton property to capture the information related to that context. Particularly, we create two singleton properties \verb|holdsPos#1| and \verb|holdsPos#2|. Then we attach the predicate \verb|hasSuccessor| to the two singleton properties to represent the 3-ary relationship among the politician, the position and the successor.


We will analyze how such a set of triples can be represented in the form of a graph by two existing approaches: the NLAN diagram \cite{cyganiak2014} and the bipartite (BI) model \cite{hayes2004bipartite}.

\begin{table}
\centering
\caption{Example triples using singleton properties to represent the facts about the American politician Bill Clinton and his successors
\label{tbl-motivating-example}}
\begin{tabular}{cllll } \noalign{\smallskip}\hline \noalign{\smallskip}
 \textbf{$N_o$} & \textbf{Subject} & \textbf{Predicate} & \textbf{Object} \\ \hline \noalign{\smallskip}
 $T_1$&BillClinton & holdsPos\#1 & U.S.President \\ 
 $T_2$&holdsPos\#1 & singletonPropOf & holdsPos \\ 
 $T_3$&holdsPos\#1 & hasSuccessor &  GeorgeWBush \\
 $T_4$&BillClinton & holdsPos\#2 & ArkansasGovernor \\ 
 $T_5$&holdsPos\#2 & singletonPropOf & holdsPos \\ 
 $T_6$&holdsPos\#2 & hasSuccessor & FrankWhite \\
 \hline \noalign{\smallskip}
 \end{tabular}
\end{table}


A triple may be represented as a NLAN diagram as explained in the W3C RDF 1.1 Concepts and Syntax \cite{cyganiak2014}. The set of triples in Table \ref{tbl-motivating-example} are represented as a NLAN diagram in Fig. \ref{nlan_example}. We observe two problems with this modeling when dealing with such a set of triples.

\begin{figure}[h!]
 \centering
 \includegraphics[width=0.45\textwidth]{./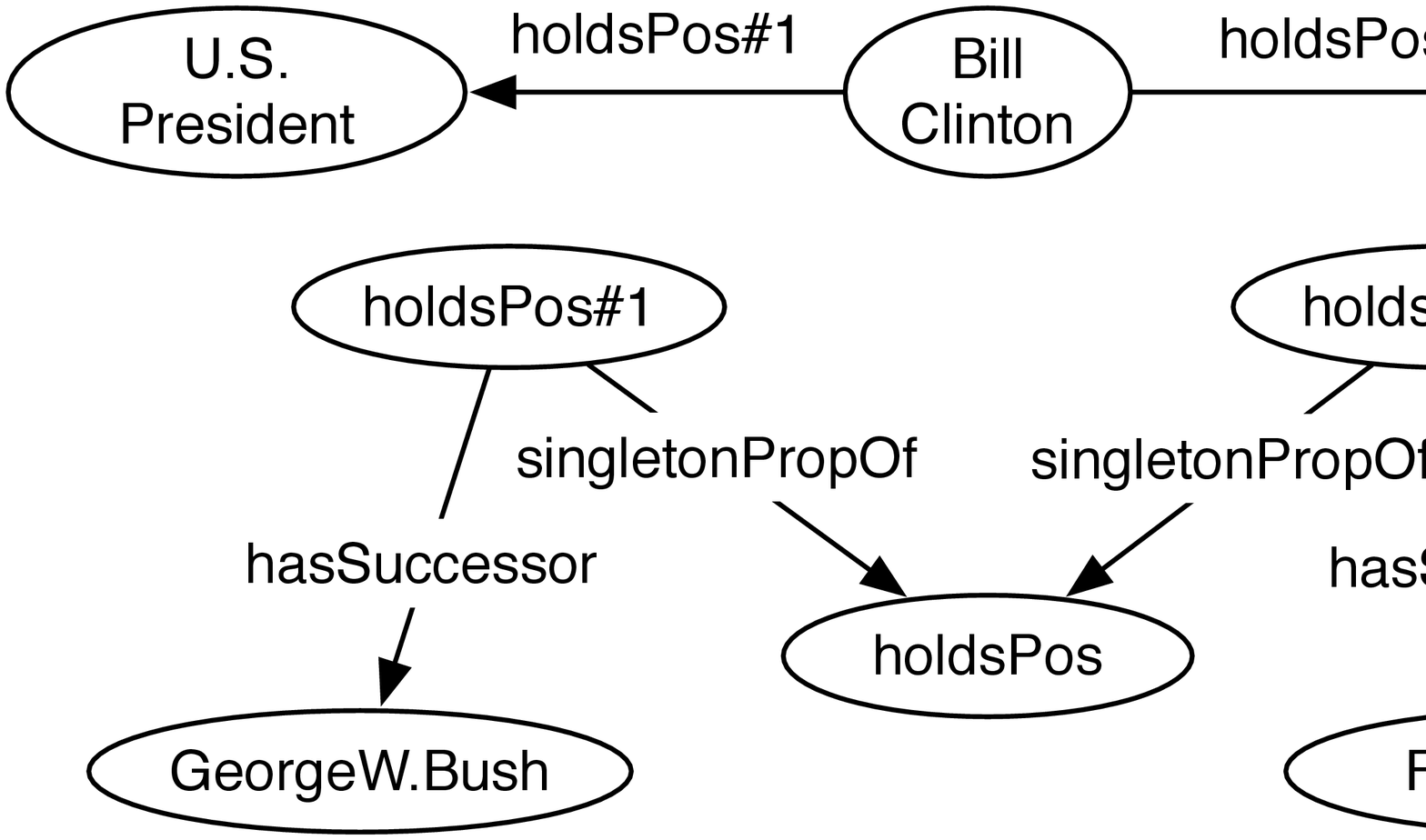}
 \caption{Node-LabeledArc-Node diagram (NLAN) for the example in Table \ref{tbl-motivating-example}. \label{nlan_example}}
\end{figure}

First and more importantly, the NLAN resulting graph is not precisely a mathematical graph. In Fig. \ref{nlan_example}, the RDF term \verb|holdsPos#1| is mapped to the predicate arc of triple $T_1$, and to the subject node in $T_2$, $T_3$. Mapping the same RDF term \verb|holdsPos#1| to more than one different mathematical object makes the object being referred to ambiguous. No mapping function satisfies this because a function must map any source object to only one target. For this reason, this NLAN model cannot be compatible with the formal semantics that is defined by several mapping functions.

The second serious problem is that the disconnectedness between subgraphs limits the possibility to traverse between the triples that are connected through the predicates. In the example at hand, the NLAN resulting graph is comprised of two disconnected subgraphs G1 (formed by two triples $T_1$ and $T_4$) and G2 (formed by four triples $T_2$, $T_3$, $T_5$ and $T_6$). Although the two predicates \verb|holdsPos#1| and \verb|holdsPos#2| are shared between the triples from two subgraphs, no connectivity between two subgraphs can be found in Fig. \ref{nlan_example}. As a consequence, although Bill Clinton is succeeded by George W. Bush, no path can be found from Bill Clinton to George W. Bush. From this perspective, the bipartite model provides better graph connectivity than the NLAN model.


The bipartite (BI) model proposed by Hayes \cite{hayes2004bipartite} does not encounter these problems of the NLAN model. It represents all subjects, predicates, and objects as nodes. It creates an auxiliary node and links this node to the subject, the predicate, and the object of this triple via three additional arcs. However, the cost incurred in this approach with one extra node and three extra arcs for every triple is too high.




\subsection{Our approach}

In this paper, we propose a new formal graph model which represents RDF triples in a Labeled Directed Multigraph with Triple Nodes (LDM-3N). Similar to the bipartite model, all subjects, predicates, and objects are mapped to nodes of a LDM-3N graph. This model, however, differs from the bipartite model in that it adds one pair of directed edges (subject-predicate, predicate-object) to directly connect three nodes of the same triple. Furthermore, this approach differs from the NLAN diagram in that the predicates are mapped to nodes instead of labeled edges. 

As traversing the NLAN graph starts from a node and explores its adjacent nodes, predicates as arcs never get explored. In the LDM-3N model, predicates are mapped to nodes in the graph. As they are adjacent to the subjects, they are explored after the subject. The objects now become adjacent to the predicates and get explored after the predicates. Traversing the LDM-3N graph this way starting from the node \verb|Subject 1| will reach the node \verb|Object 2| after 3 hops, as shown in Fig. \ref{connected-subgraphs}.

\begin{figure}[h!]
 \centering
 \includegraphics[width=0.45\textwidth]{./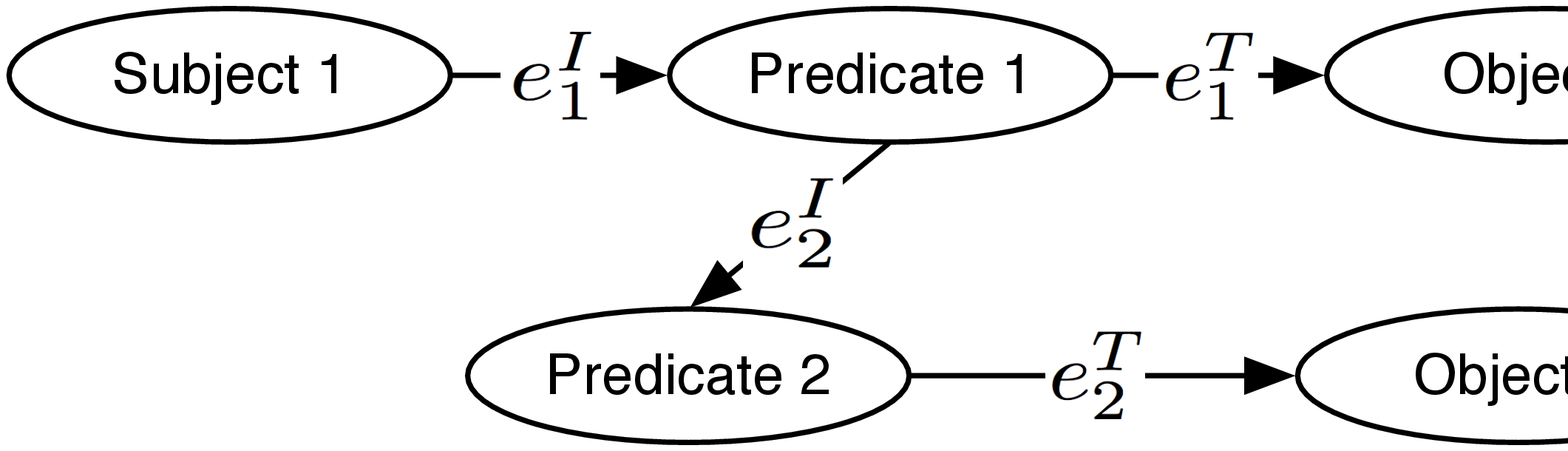}
 \caption{Subgraphs G1 and G2 from Fig. \ref{disconnected-subgraphs} are now connected in a Labeled Directed Multigraph with Triple Nodes (LDM-3N). \label{connected-subgraphs}}
\end{figure}

The LDM-3N graph representation of the example from Table \ref{tbl-motivating-example} is illustrated in Fig. \ref{fg-motivating-example}. For the set of six triples in Table \ref{tbl-motivating-example}, we map all subjects, predicates, and objects to the set of ten nodes. We link every three nodes representing the same triple by one pair of directed edges. For instance, the three nodes \{\verb|BillClinton|, \verb|holdsPos#1|, \verb|U.S.President|\} are connected by the pair of edges ($e^I_1$, $e^T_1$) to form the triple $T_1$.

\begin{figure}[h!]
 \centering
 \includegraphics[width=0.48\textwidth]{./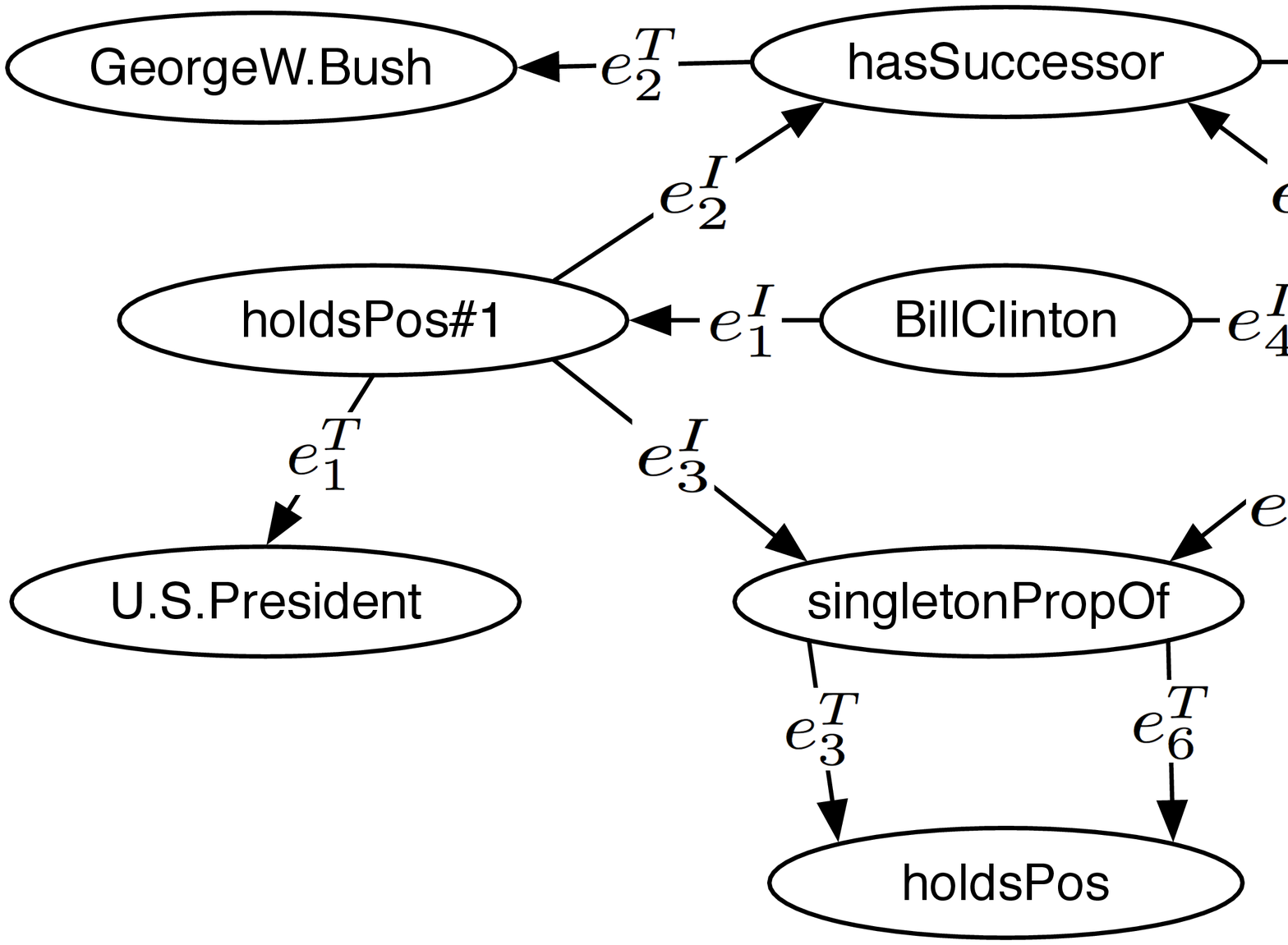}
 \caption{Labeled Directed Multigraph with Triple Nodes (LDM-3N) for the example in Table \ref{tbl-motivating-example}. \label{fg-motivating-example}}
\end{figure}

Compared to the NLAN model, the LDM-3N graph model provides better connectivity and makes it possible to find the answers for the reachability and shortest path queries. While the node \verb|GeorgeWBush| is unreachable from the node \verb|BillClinton| in the NLAN model, it is not only reachable but also neatly presented as the resource path (\verb|BillClinton|, \verb|holdsPos#1|, \verb|hasSuccessor|, \verb|GeorgeWBush|) in the LDM-3N model. We will formally define resource path and triple path in Section \ref{traversing-graph}.


In terms of practical impact, we show that compared to the NLAN model, our LDM-3N model does not introduce cost to the system in terms of space even though it conceptually adds two edges to connect (subject-predicate) and (predicate-object) in each triple. These initial and terminal edges that allow us to disambiguate the nodes forming each triple in the abstract model are unnecessary in the physical model. It is because the triples are already stored separately in the indices. We use the same indices for traversing in both models. In terms of execution time, traversing the NLAN graph is supposed to be faster than traversing the LDM-3N graph because exploration of the predicate nodes is bypassed. However, that extra exploration in the LDM-3N graph allows for additional paths that are impossible to find in the NLAN graph. We will report the extra time taken in traversing the LDM-3N graph through the shortest path algoritm in Section \ref{evaluation}.


We summarize our contributions in this paper next: 
\begin{itemize}
\item We develop a formal graph model (LDM-3N) for RDF. For traversing the new graph, we define two types of paths, resource path and triple path with a new traversal algorithm (in Section \ref{ldm-model}).
\item We rewrite the model-theoretic formal semantics for RDF with three levels of interpretation (simple, RDF, and RDFS) based on the LDM-3N model (Section \ref{semantics}) and demonstrate the graph entailments using RDFS deduction rules (Section \ref{entailments}), which the NLAN and BI models do not support.
\item We implement a new engine called GraphKE and extend the RDF-3X to support the LDM-3N graph model (Section \ref{implementation}). We implement and evaluate the empirical performance of the reachability and shortest path queries on both engines (Section \ref{evaluation}) to demonstrate that our formal model is technically viable for both new and existing systems.
\end{itemize}
We discuss related work and future work in Section \ref{related-work} and Section \ref{future}, respectively, and finally conclude the paper with Section \ref{conclusion}. 

\section{RDF Formal Graph Model}
\label{ldm-model}

We demonstrate how to represent any RDF triples in the LDM-3N model. We start with the step-by-step modeling of simple facts in Section \ref{simple-facts}, then formalize the LDM-3N model in Section \ref{formal-model}.

\subsection{Representing RDF Graph }
\label{simple-facts}

Referring back to the motivating example in Section \ref{motivating}, the NLAN diagram represents the triple \\
\indent \indent $T_3:$ \verb| holdsPos#1| \verb|hasSuccessor| \verb| GeorgeWBush| \\by mapping the predicate \verb|hasSuccessor| to an arc which connects the two nodes \verb|holdsPos#1| and \verb|GeorgeWBush|. We argue that, just as subjects and objects are mapped to nodes, predicates should also be mapped to nodes since they are all instances of the same class \verb|rdf:Resource|. The next question is ``How do we link those three nodes to form a triple?" 

Existing models use either labeled edges (NLAN) or one auxiliary node plus three labeled edges (BI). We will approach the question from another perspective. Although the predicate is mapped to a node, this predicate node differs from subject or object nodes because of its linking role. We need to represent this link without causing a discrepancy like the one found in the NLAN model. 

Here we have three nodes to form a triple. However, every edge in the graph can only connect two nodes. Instead of adopting a complicated solution like using a hyperedge, how about using two regular edges? Obviously, two directed edges can connect the three nodes into two pairs as illustrated in Fig. \ref{individual-example} (a). However, if we add another triple, say $T_6$: \verb|holdsPos#2 hasSuccessor FrankWhite|, then an ambiguity arises in figuring out which two edges form the triple as in Fig. \ref{individual-example} (b). The next question is ``How do we tie these two edges together to represent that they are part of the same triple?"

\begin{figure}[h!]
 \centering
 \includegraphics[width=0.45\textwidth]{./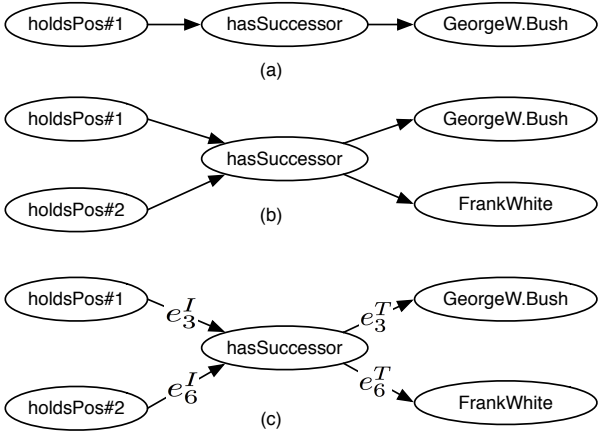}
 \caption{A step by step construction of a LDM-3N graph for the two triples $T_3$ and $T_6$. \label{individual-example}}
\end{figure}

Since we have two edges, we call the first edge between subject and predicate the initial edge, and we call the second edge between predicate and object the terminal edge. We create a mapping that maps every initial edge to its terminal edge. The pair of edges for a given triple $i$ is denoted as ($e^I_i, e^T_i$), where $e^I_i$ is the initial edge and $e^T_i$ is the terminal edge. It follows that ($e^I_3$, $e^T_3$) corresponds to $T_3$ and ($e^I_6$, $e^T_6$) to $T_6$. The complete LDM-3N graph for $T_3$ and $T_6$ is illustrated in Fig. \ref{individual-example} (c).

Next, we will formalize the LDM-3N model for any set of RDF triples.

\subsection{Formalizing RDF Graph}
\label{formal-model}
Formally, any RDF dataset is a set of RDF triples. Let $\mathcal{T}$ = \{$t_0, t_1, ..., t_n$\} be a set of triples on the vocabulary V, and let $G_{RDF}$ be the labeled directed multigraph with triple nodes (LDM-3N) of $\mathcal{T}$. We will first create the graph $G_{RDF}$ from $\mathcal{T}$ and then regenerate the set of original triples $\mathcal{T}$ from $G_{RDF}$.
\begin{proposition} \label{proposition-forward}
(Forward transformation). Any set of RDF triples can be transformed into a labeled directed multigraph with triple nodes $G_{RDF}$.
\end{proposition}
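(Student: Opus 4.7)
The plan is to give a constructive proof: exhibit an explicit procedure that, given $\mathcal{T} = \{t_0, t_1, \ldots, t_n\}$ with $t_i = (s_i, p_i, o_i)$, produces a concrete $G_{RDF} = (V, E, \sigma)$ and then verify that the output conforms to the three defining features of a LDM-3N graph (labeled, directed, multigraph, with the triple-node pairing).

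First I would define the vertex set uniformly as $V = \bigcup_{i=0}^{n} \{s_i, p_i, o_i\}$, treating subjects, predicates, and objects on an equal footing as instances of \verb|rdf:Resource|. This is precisely the departure from the NLAN model that the introduction emphasizes: because every term in V is mapped to one and only one vertex regardless of the syntactic position it occupies, the construction is immediately a well-defined function on resources, and the ambiguity that arose from a predicate being simultaneously an arc and a node in NLAN cannot occur. Next, for each triple $t_i$ I would introduce two directed edges, $e^I_i = (s_i, p_i)$ with label $I$ and $e^T_i = (p_i, o_i)$ with label $T$, and collect them into a multiset $E = \{e^I_i, e^T_i : 0 \le i \le n\}$. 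Finally, I would define the pairing $\sigma(e^I_i) = e^T_i$, which records which initial edge belongs to which terminal edge so that the triple structure is recoverable from the graph.

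To finish, I would check the three required properties in turn: directedness is immediate from the use of ordered pairs; the $I/T$ labels and the per-triple indexing make the graph labeled; and allowing $E$ to be a multiset ensures that two distinct triples sharing the same three resources (for instance $(s,p,o)$ asserted twice with different singleton predicates collapsing to the same endpoints) still contribute distinct edge pairs rather than being silently merged, which is what forces the ``multi'' in multigraph. The triple-node condition is witnessed by the bijection $\sigma$ together with the fact that $p_i \in V$.

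The main obstacle, though conceptual rather than computational, is the case in which a single RDF term occupies several syntactic roles across $\mathcal{T}$, e.g.\ a predicate of one triple that is also a subject of another (as with \verb|holdsPos#1| in the motivating example). I expect the bulk of the argument to be devoted to making explicit that, under the construction above, such a term produces exactly one vertex, that its incident edges carry enough labeling information ($I$ versus $T$, plus the triple index $i$) to recover the original triple to which each edge belongs, and that consequently the forward map $\mathcal{T} \mapsto G_{RDF}$ is total and deterministic on every input set of triples. Once this is laid out carefully, the proposition follows.
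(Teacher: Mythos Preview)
Your construction is correct and follows essentially the same route as the paper: take all RDF terms as nodes, add two directed edges $(s_i,p_i)$ and $(p_i,o_i)$ per triple, and record a bijection pairing each initial edge with its terminal edge. The only difference is packaging: the paper keeps nodes abstract and carries an explicit labeling bijection $\mu: V \to N$ together with an incidence map $\epsilon: E \to N \times N$, yielding the 5-tuple $G_{RDF} = (N, E, \epsilon, \tau, \mu)$, whereas you identify vertices with the terms themselves and fold the incidence into a multiset of labeled ordered pairs; since $\mu$, $\epsilon$, and $\tau$ are reused verbatim in the backward transformation and in the formal semantics, it is worth aligning your notation with the paper's tuple rather than the compressed $(V,E,\sigma)$.
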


\begin{proof}
Let $V$ be the set of RDF terms in $\mathcal{T}$. Let $N$ and $E$ be the set of nodes and the set of directed edges in the graph $G_{RDF}$, respectively.

The bijective function $\mu: V \rightarrow N$ maps an RDF term in $V$ to a node in $N$.
Let $t_i$ be a triple in $\mathcal{T}$, $t_i = (s_i, p_i, o_i) \in \mathcal{T}$ with $0 \leq i \leq n$. Let $N_i \subset N$ such that $N_i = \{n_{si}, n_{pi}, n_{oi} | n_{si} = \mu(s_i), n_{pi} = \mu(p_i), n_{oi} = \mu(o_i)\}$, then $N = \bigcup_{i=0}^n N_i$.

The function $\epsilon : $ {\it E} $ \rightarrow$ {\it N} $\times$ {\it N} is defined to map every edge in $E$ to an ordered-pair of nodes. Let $e^I_i, e^T_i \in E: \epsilon(e^I_i) = (n_{si}, n_{pi})$ and $\epsilon(e^T_i) = (n_{pi}, n_{oi})$. 

The bijective function $\tau: $ {\it E} $ \rightarrow$ {\it E} maps an initial edge to a terminal edge of the same triple. Then $\tau(e^I_i) = e^T_i$. Let $E_i \subset E$ be the set of two edges representing $t_i$, $E_i = \{e^I_i, e^T_i\}$, and $E = \bigcup_{i=0}^n E_i$. 

Therefore, $G_i = (N_i, E_i, \epsilon, \tau, \mu)$ is the labeled directed graph with triple nodes of the triple $t_i$. 

Finally, with $N = \bigcup_{i=0}^n N_i$ and $E = \bigcup_{i=0}^n E_i$, the graph $G_{RDF} = (N, E, \epsilon, \tau, \mu)$ is a labeled directed multigraph with triple nodes for all of the triples in $\mathcal{T}$.
\qed
\end{proof}

\begin{proposition}(Backward transformation).
Given the graph $G_{RDF}(N, E, \epsilon, \tau, \mu)$ transformed by Proposition \ref{proposition-forward}, a set of RDF triples can be derived from $G_{RDF}$.
\end{proposition}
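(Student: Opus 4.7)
The plan is to invert each of the three structural mappings supplied by Proposition \ref{proposition-forward}. Since $\mu: V \to N$ is declared bijective, it has a well-defined inverse $\mu^{-1}: N \to V$ that recovers the RDF term labeling any given node. The function $\tau$ establishes a one-to-one correspondence between initial edges and terminal edges, so its domain identifies exactly which edges of $E$ are initial, and for each such edge $e^I$ the value $\tau(e^I)$ pinpoints its partnered terminal edge. Finally, $\epsilon$ exposes the ordered endpoints of every edge in $E$, so combining $\epsilon$ with $\tau$ recovers the three nodes that belong to a common triple.

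Concretely, I would define the reconstruction procedure as follows. For each $e^I$ in the domain of $\tau$, set $e^T = \tau(e^I)$, read off $\epsilon(e^I) = (n_s, n_p)$ and $\epsilon(e^T) = (n_p', n_o)$, and form the candidate triple
\[
\bigl(\mu^{-1}(n_s),\ \mu^{-1}(n_p),\ \mu^{-1}(n_o)\bigr).
\]
By the construction rules in Proposition \ref{proposition-forward}, the second coordinate of $\epsilon(e^I)$ coincides with the first coordinate of $\epsilon(\tau(e^I))$, so $n_p = n_p'$ and the triple is unambiguously defined. Collect all such triples into a set $\mathcal{T}'$.

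It then remains to show $\mathcal{T}' = \mathcal{T}$. For the inclusion $\mathcal{T} \subseteq \mathcal{T}'$, each original triple $t_i = (s_i, p_i, o_i)$ contributed a pair $(e^I_i, e^T_i)$ to $E$ whose endpoints are precisely $\mu(s_i), \mu(p_i), \mu(o_i)$, so applying $\mu^{-1}$ recovers $t_i$ exactly. For the reverse inclusion $\mathcal{T}' \subseteq \mathcal{T}$, every initial edge in $E$ was introduced by the forward transformation as the $e^I_i$ of some triple $t_i \in \mathcal{T}$, and by the bijectivity of $\tau$ and $\mu$ the reconstructed triple cannot differ from $t_i$.

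The main obstacle I expect is not mechanical but definitional: I need to be careful that the reconstruction treats each edge in $\tau$'s pairing exactly once, so that duplicate triples are not produced and no triple is silently merged with another. This amounts to observing that the forward transformation associates distinct index pairs $(e^I_i, e^T_i)$ with distinct occurrences of triples, while still collapsing coincidental duplicates via the set-union $E = \bigcup_i E_i$ and the bijectivity of $\mu$. Making this bookkeeping explicit, rather than any deep structural argument, is where care is required.
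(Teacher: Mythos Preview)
Your proposal is correct and follows essentially the same approach as the paper: iterate over initial edges, use $\tau$ to locate the matching terminal edge, read off the three nodes via $\epsilon$, and relabel them with $\mu^{-1}$ to recover each triple. You are in fact more careful than the paper, which neither checks that $n_p = n_p'$ nor verifies that the reconstructed set coincides with the original $\mathcal{T}$; the paper's statement only asserts that \emph{a} set of triples can be derived, so your additional two-sided inclusion argument goes beyond what is strictly required but is a welcome tightening.
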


\begin{proof}

Let $e^I_i$ be any edge in $E$ with $0 \leq i \leq n$, the corresponding terminal edge of $e^I_i$ in $E$ is $e^T_i = \tau(e^I_i)$. From this pair of edges $e^I_i$ and $e^T_i$, we find the nodes connected by the two edges by using the $\epsilon$ function. Let $n_{si}, n_{pi}, n_{oi} \in N$ such that $(n_{si}, n_{pi}) = \epsilon(e^I_i)$, $(n_{pi}, n_{oi}) = \epsilon(e^T_i)$. 

Let $\mu^{-1}$ be the reverse function of $\mu$, then $\mu^{-1}: N \rightarrow V$ returns the RDF term mapped to a graph node. Let $s_i, p_i, o_i \in V$ such that $s_i = \mu^{-1}(n_{si})$, $p_i = \mu^{-1}(n_{pi})$ and $o_i = \mu^{-1}(n_{oi})$.
 The three nodes form the original triple $t_i = (s_i, p_i, o_i)$.
The set $\mathcal{T}$ of all RDF triples $t_i$ derived from $G_{RDF}$ is as follows:\\
$\mathcal{T} = \{t_i|\forall i: t_i = (s_i, p_i, o_i)$\}. 
\qed
\end{proof}

\begin{proposition} The size of the graph $G_{RDF}$ on the set of triples $\mathcal{T}$ is $2\vert\mathcal{T}\vert$.
\end{proposition}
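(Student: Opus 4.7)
The plan is to read ``size of the graph'' as the number of edges $|E|$ and to show $|E| = 2|\mathcal{T}|$ directly from the construction in Proposition~\ref{proposition-forward}. The construction assigns to each triple $t_i \in \mathcal{T}$ exactly one pair $(e^I_i, e^T_i)$ of directed edges, so $|E_i| = 2$ for every $i$, and $E = \bigcup_{i} E_i$. Hence the whole argument reduces to checking that this union is disjoint, i.e., that edges created for different triples are genuinely different elements of $E$.

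First, I would state explicitly that $|E_i| = 2$: the initial edge $e^I_i$ and the terminal edge $e^T_i$ are distinct because $\epsilon(e^I_i) = (n_{si}, n_{pi})$ while $\epsilon(e^T_i) = (n_{pi}, n_{oi})$, and since $s_i \neq p_i$ as RDF terms in a well-formed triple (and $\mu$ is bijective), these ordered pairs differ. Next I would argue disjointness across distinct triples $t_i \neq t_j$: even when two triples happen to share the same subject, predicate, or object, the graph is a \emph{multigraph} whose edges are primitive objects indexed by the triple they come from, so $e^I_i \neq e^I_j$ and $e^T_i \neq e^T_j$ whenever $i \neq j$. This is reinforced by the function $\tau$ being a bijection on $E$ that pairs each initial edge with its unique terminal edge, so sharing an edge between two distinct triples would force $\tau$ to send one initial edge to two different terminal edges, contradicting that $\tau$ is a function.

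Putting these two observations together, the union $E = \bigsqcup_{i=0}^{n} E_i$ is disjoint and each summand has cardinality $2$, so
\[
|E| \;=\; \sum_{i=0}^{n} |E_i| \;=\; 2(n+1) \;=\; 2|\mathcal{T}|,
\]
which is the claim.

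I do not expect any serious obstacle; the only subtle point is clarifying what ``size'' means (edges rather than nodes) and justifying that the multigraph convention keeps edges from distinct triples distinct even when the endpoints coincide. Everything else is a direct bookkeeping consequence of the forward transformation in Proposition~\ref{proposition-forward}.
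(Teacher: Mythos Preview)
Your proof is correct and follows the same counting argument as the paper, which simply observes that each triple contributes exactly two edges and multiplies by $|\mathcal{T}|$. You supply more care than the paper does (disjointness of the $E_i$, the multigraph convention); the only quibble is that RDF does not actually forbid $s_i = p_i$, but your multigraph argument already covers that case, so the conclusion is unaffected.
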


\begin{proof}
The size of the graph $G_{RDF}$ is the number of edges needed to form all of the triples in $\mathcal{T}$. For each triple t = (s, p, o) $\in \mathcal{T}$, we need 2 edges to form the triple. Therefore, as the number of triples in $\mathcal{T}$ is $\vert\mathcal{T}\vert$, we need $2\vert\mathcal{T}\vert$ edges.
\qed
\end{proof}

\subsection{Traversing RDF Graph}
\label{traversing-graph}
A triple path is defined as a sequence of triple subgraphs where two adjacent triple subgraphs share one common node, that the subject node of the later triple subgraph is also the object or predicate node of the previous triple subgraph. In the given LDM-3N graph $G_{RDF}$, the triple path $tp$ = ($t_i$, $t_{i+1}$, ..., $t_j$) where $t_k$ = ($n_{sk}$, $n_{pk}$, $n_{ok}$) with $i \leq k \leq j$ and $t_{k+1}$ = ($n_{s(k+1)}$, $n_{p(k+1)}$, $n_{o(k+1)}$) satisfy one of following conditions: (1) $n_{s(k+1)}$ = $n_{pk}$ or (2) $n_{s(k+1)}$ = $n_{ok}$. 

For example, ($T_1$, $T_3$) from Table \ref{tbl-motivating-example} is a triple path because two triples share the node \verb|holdsPos#1|.

A resource path is defined as a sequence of nodes such that (1) every two adjacent nodes are connected by an edge, (2) every three nodes connected by a pair of initial and terminal edges should form a triple. In the given LDM-3N graph $G_{RDF}$, the resource path ($n_0$, $n_1$, ..., $n_k$) satisfies the following conditions:
\begin{enumerate}
\item ($n_i$, $n_{i+1}$) = $\epsilon(e_i)$ $\in$ $E$ for all $0 \leq i \leq k-1$ 
\item for every edge $e^T_i$ = ($n_i$, $n_{i+1}$) $\in$ $E$ and $\exists e^I_i \in E$ such that $\tau(e^I_{i}) = e^T_i$, 

if $\exists e^I_{i-1}$, $e^T_{i-1} \in E$ such that $e^I_{i-1}$ = ($n_{i-1}$, $n_{i}$) $\in$ $E$ and $\tau(e^I_{i-1}) = e^T_{i-1}$, then $\tau(e^I_{i-1}) = e^T_{i}$, or $e^T_{i-1} = e^T_{i}$.
\end{enumerate}

The second condition excludes the case where the two initial and terminal edges from different triples are adjacent to each other in a path. From the motivating example, the path (\verb|BillClinton|, \verb|holdsPos#1|, \verb|hasSuccessor|, \verb| GeorgeWBush|) satisfies these conditions and hence, it is a resource path. We will describe the implementation of the shortest resource path in Section \ref{shortest-resource-path} and its evaluation in Section \ref{evaluation}.

\section{Formal Semantics}
\label{semantics}
This section explains how a labeled directed multigraph with triple nodes, or LDM-3N graph, can be exploited as an underlying model for RDF(S) formal semantics. Similar to the model-theoretic semantics described in \cite{hitzler2011foundations,Nguyen:2014:DLR:2566486.2567973}, we also represent three levels of interpretation: simple, RDF, and RDFS. However, we define new functions for modeling the underlying labeled directed multigraphs and use them to redefine class/property extension functions. 

\subsection{Simple interpretation}
The {\bf simple interpretation} $\mathcal{I}$ of the vocabulary V based on the LDM model consists of: 
\begin{itemize}
\item {\it IN}, a non-empty set of {\it nodes} in $\mathcal{I}$,
\item {\it IE}, a set of {\it directed edges} in $\mathcal{I}$,

\item $\mathcal{I_E}:$ {\it IE} $\rightarrow$ {\it IN} $\times$ {\it IN}, mapping each edge to an ordered pair of nodes. This function may be many-to-one because a multigraph allows multiple edges between two nodes.
\item $\mathcal{I_T}:$ {\it IE} $\rightarrow$ {\it IE}, mapping an initial edge to a terminal edge of the same triple. This is a bijective function, and that the reverse function $\mathcal{I}^{- 1}_T:$ {\it IE} $\rightarrow$ {\it IE} maps a terminal edge to its initial edge.

\item {\it EL}, a set of distinct labels to be assigned to the edges in {\it IE},
\item $I_{EL}: EL \rightarrow IE$, a labeling function, mapping labels from {\it EL} into the set {\it IE} of edges. All labeling functions are also bijective.
\item $I_S: V \rightarrow IN$, a labeling function, mapping URIs from $V$ into {\it IN},
\item $LV$, a set of {\it literal values} in IN,
\item $I_L: V \rightarrow IN$, a labeling function, mapping literal values from $V$ to {\it IN}.
\end{itemize}
Let $\cdot^\mathcal{I}$ be the interpretation function that maps all the URIs and literals in $V$ to the set of nodes in {\it IN}. A ground triple ($s$ $p$ $o$ $ .)^\mathcal{I} $ is assigned true if all $s, p, o \in V$ and $\exists e_1, e_2 \in$ {\it IE} $:\mathcal{I_E}(e_1) = (s^\mathcal{I}, p^\mathcal{I}), \mathcal{I_E}(e_2) = (p^\mathcal{I}, o^\mathcal{I})$, and $\mathcal{I_T}(e_1) = e_2$.

The simple interpretation in \cite{hitzler2011foundations} contains an extension function that maps a property to a set of resource pairs. In this simple interpretation, we define a set of new functions, $\mathcal{I_E}$ and $\mathcal{I_T}$, for representing nodes and edges of the underlying labeled directed multigraphs with triple nodes.
To make this interpretation compatible with the existing ones, we incorporate the existing criteria for generic properties and singleton properties into this interpretation. The simple interpretation $\mathcal{I}$ also satisfies the following criteria:

\begin{itemize}
\item {\it IP}, a set of {\it generic property nodes}, which is also a subset of {\it IN}, {\it IP} $\subset$ {\it IN}.

\item $I_{EXT}$, a function assigning to each property node a set of pairs from {\it IN}.\\
 $I_{EXT}$ : {\it IP} $\rightarrow 2^{\it IN \times IN}$ where $I_{EXT}(p)$ is the {\it extension} of generic property {\it p}.
 Particularly, $I_{EXT}(p) = \{(s, o)| \exists e_1, e_2 \in$ {\it IE} $:\mathcal{I_E}(e_1) = (s, p), \mathcal{I_E}(e_2) = (p, o)$, and $\mathcal{I_T}(e_1) = e_2$\}.
 
 \item {\it IPs}, called the set of {\it singleton property nodes} of $\mathcal{I}$, as a subset of {\it IN}, 

 \item $I_{S\_EXT}(p_s)$, a function mapping a singleton property to a pair of resources.
$I_{S\_EXT}$ : {\it IPs} $\rightarrow {\it IN \times IN}$. Particularly,
 $I_{S\_EXT}(p_s) = (s, o)$ such that $\exists e_1, e_2 \in$ {\it IE} $:\mathcal{I_E}(e_1) = (s, p_s), \mathcal{I_E}(e_2) = (p_s, o)$, and $\mathcal{I_T}(e_1) = e_2$.

\end{itemize}

\subsection{RDF interpretation}
The {\bf RDF interpretation} of a vocabulary V is a simple interpretation $\mathcal{I}$ of the vocabulary V $\cup$ $V_{RDF}$ that satisfies the following criteria:
\begin{itemize}

\item {\it $p$} $\in$ {\it IP} if $\exists e_1, e_2 \in$ {\it IE} $:\mathcal{I_E}(e_1) = (p, $ rdf:type$^\mathcal{I}), \\ \mathcal{I_E}(e_2) = ($rdf:type$^\mathcal{I}, $ rdf:Property$^\mathcal{I})$,\\ and $\mathcal{I_T}(e_1) = e_2$. 
A generic property is an instance of the class rdf:Property.

\item {\it $p_s$} $\in$ {\it IPs} if $\exists e_1, e_2 \in$ {\it IE} $:\mathcal{I_E}(e_1) = (p_s, $ rdf:type$^\mathcal{I}), \\\mathcal{I_E}(e_2) = ($rdf:type$^\mathcal{I}, $ rdf:SingletonProperty$^\mathcal{I})$, \\and $\mathcal{I_T}(e_1) = e_2$. Every singleton property is an instance of the class rdf:SingletonProperty.

\item {\it $p_s$} $\in$ {\it IPs} if $\exists e_1, e_2 \in$ {\it IE}, $p \in $ {\it IP} $:\mathcal{I_E}(e_1) = (p_s, $ rdf:singletonPropertyOf$^\mathcal{I}), \\ \mathcal{I_E}(e_2) = ($rdf:singletonPropertyOf$^\mathcal{I}, p)$, and $\mathcal{I_T}(e_1) = e_2$. A singleton property is connected to a generic property via the rdf:singletonPropertyOf.

\item if {\it $p_s$} $\in$ {\it IPs} then $\exists! (e_1, e_2): \mathcal{I_E}(e_1) = (s, p_s), \mathcal{I_E}(e_2) = (p_s, o), \mathcal{I_T}(e_1) = e_2$, with $s, o \in$ {\it IN} and $e_1, e_2 \in$ {\it IE}. This ensures only one occurrence of a singleton property as a predicate of a triple.

\item if $``s"\string^\string^$rdf:XMLLiteral is in $V$ and s is a well-typed XML Literal, then $I_L(``s"\string^\string^$rdf:XMLLiteral$) \in LV$ and $\exists e_1, e_2 \in $ {\it IE}$ : \mathcal{I_E}(e_1) = (s,$ rdf:type$^\mathcal{I}),\\ \mathcal{I_E}(e_2) = ($rdf:type$^\mathcal{I}, $ rdf:XMLLiteral$^\mathcal{I}$), and $\mathcal{I_T}(e_1) = e_2$. All well-typed XML literals are in $LV$ and they also are instances of class rdf:XMLLiteral.

\item if $``s"\string^\string^$rdf:XMLLiteral is in $V$ and s is an ill-typed XML Literal, then \\ $I_L(``s"\string^\string^$rdf:XMLLiteral$) \notin LV$ and $\not\exists e_1, e_2 \in $ {\it IE}$ : \mathcal{I_E}(e_1) = (s,$ rdf:type$^\mathcal{I}),\\ \mathcal{I_E}(e_2) = ($rdf:type$^\mathcal{I}, $ rdf:XMLLiteral$^\mathcal{I}$), and $\mathcal{I_T}(e_1) = e_2$. \\
$LV$ does not contain ill-typed XML literal values.

\end{itemize} 

\subsection{RDFS interpretation}
In the RDFS interpretation, we define the function $I_{\it CEXT}: {\it IN} \rightarrow 2^{\it IN}$ 
where $I_{\it CEXT}(y)$ is called the class extension of $y$. Although some critiera can alternatively be expressed using these two functions $I_{CEXT}$ and $I_{EXT}$, here we explicitly express these criteria as graph constraints to demonstrate their graph nature. We show only some of the more important conditions here due to space constraints. 

{\bf RDFS interpretation} of a vocabulary V is an RDF interpretation $\mathcal{I}$ of the vocabulary V $ \cup$ $V_{RDF}$ $\cup$ $V_{RDFS}$ that satisfies the following criteria:

\begin{itemize}
\item $I_{CEXT}$ : {\it IP} $\rightarrow 2^{\it IN}$ , a function assigning to each class a set of nodes from {\it IN}. 
$I_{CEXT}(c)$ is called the {\it class extension} of class {\it c}.
 Particularly, $I_{CEXT}(c) = \{s | s \in $ {\it IN,} $\exists e_1, e_2 \in$ {\it IE} $:\mathcal{I_E}(e_1) = (s, $ rdf:type$^\mathcal{I}), \mathcal{I_E}(e_2) = ($rdf:type$^\mathcal{I}, c)$, and $\mathcal{I_T}(e_1) = e_2$\}.

\item if $\exists e_1, e_2, e_3, e_4 \in$ {\it IE}$ : \mathcal{I_T}(e_1) = e_2, \mathcal{I_T}(e_3) = e_4, \\\mathcal{I_E}(e_1) = (x, $ rdfs:domain$^\mathcal{I}), \mathcal{I_E}(e_2) = ($rdfs:domain$^\mathcal{I}, y)$, $ \mathcal{I_E}(e_3) = (u, x), \mathcal{I_E}(e_4) = (x, v)$, \\then $\exists e_5, e_6 \in$ {\it IE}$:\mathcal{I_T}(e_5) = e_6, \\\mathcal{I_E}(e_5) = (u, $ rdf:type$^\mathcal{I}), \mathcal{I_E}(e_6) = ($rdf:type$^\mathcal{I}, y)$.\\If one class is a domain of a property, then the class extension includes all subjects in the same triples with the property.

\item if $\exists e_1, e_2, e_3, e_4 \in$ {\it IE}$ : \mathcal{I_T}(e_1) = e_2, \mathcal{I_T}(e_3) = e_4, \\\mathcal{I_E}(e_1) = (x, $ rdfs:range$^\mathcal{I}), \mathcal{I_E}(e_2) = ($rdfs:range$^\mathcal{I}, y)$, $\mathcal{I_E}(e_3) = (u, x), \mathcal{I_E}(e_4) = (x, v)$, \\then $\exists e_5, e_6 \in$ {\it IE}$:\mathcal{I_T}(e_5) = e_6, \\\mathcal{I_E}(e_5) = (v, $ rdf:type$^\mathcal{I}), \mathcal{I_E}(e_6) = ($rdf:type$^\mathcal{I}, y)$. \\The class range of a property includes all objects in the same triples with the property.

\item if $\exists e_1, e_2 \in$ {\it IE}$ : \mathcal{I_T}(e_1) = e_2,\\ \mathcal{I_E}(e_1) = (x, $ rdfs:subPropertyOf$^\mathcal{I}),$ \\and $ \mathcal{I_E}(e_2) = ($rdfs:subPropertyOf$^\mathcal{I}, y)$, then $x, y \in ${\it IP} and $I_{EXT}(x) \subseteq I_{EXT}(y)$. The extension of a property is a subset of the extension of its super property.

\item if $\exists e_1, e_2 \in$ {\it IE}$ : \mathcal{I_T}(e_1) = e_2,\\ \mathcal{I_E}(e_1) = (x, $ rdfs:subClassOf$^\mathcal{I}),$ \\and $ \mathcal{I_E}(e_2) = ($rdfs:subClassOf$^\mathcal{I}, y)$, \\then $I_{CEXT}(x) \subseteq I_{CEXT}(y)$. The extension of a class is a subset of its super class extension.

\end{itemize}

\subsection{RDFS Entailments}
\label{entailments}
Here we present how the LDM-3N graph-based semantics described in Section \ref{semantics} can derive other graphs using RDFS rules. We demonstrate the entailments by using three rules: rdfs5, rdfs7, and rdfs9 from \cite{hitzler2011foundations}. We choose these rules since they are commonly used for reasoning with class and property hierarchy. Other rules can also be applied in the same way.
Let $G$ and $G'$ be the two LDM-3N graphs.

\begin{itemize}
\item rdfs5: if ($u$, rdfs:subPropertyOf, $v$) and \\($v$, rdfs:subPropertyOf, $x$) \\then ($u$, rdfs:subPropertyOf, $x$). This rule states that the rdfs:subPropertyOf property is transitive.\\
if $\exists e_1, e_2, e_3, e_4 \in$ {\it IE} in $G : \mathcal{I_T}(e_1) = e_2, \mathcal{I_T}(e_3) = e_4, \\\mathcal{I_E}(e_1) = (u, $ rdfs:subPropertyOf$^\mathcal{I}), \\\mathcal{I_E}(e_2) = ($rdfs:subPropertyOf$^\mathcal{I}, v),\\ \mathcal{I_E}(e_3) = (v, $ rdfs:subPropertyOf$^\mathcal{I}), \\\mathcal{I_E}(e_4) = ($rdfs:subPropertyOf$^\mathcal{I}, x) $, \\then
$\exists e_5, e_6 \notin$ {\it IE} $ : \mathcal{I_T}(e_5) = e_6, \\\mathcal{I_E}(e_5) = (u, $ rdfs:subPropertyOf$^\mathcal{I}),\\ \mathcal{I_E}(e_6)$ $= ($rdfs:subPropertyOf$^\mathcal{I}, x)$. The graph $G$ entails $G'$: $IE' = IE \cup \{e_5, e_6\}$. 

\item rdfs7: if ($a$, rdfs:subPropertyOf, $b$) and ($u$, $a$, $y$) then ($u$, $b$, $y$).\\ All resources interlinked by a property are interlinked by its super property.\\
if $\exists e_1, e_2, e_3, e_4 \in$ {\it IE} in $G : \mathcal{I_T}(e_1) = e_2, \\\mathcal{I_E}(e_1) = (a, $ rdfs:subPropertyOf$^\mathcal{I}),\\ \mathcal{I_E}(e_2) = ($rdfs:subPropertyOf$^\mathcal{I}, b),\\ \mathcal{I_E}(e_3) = (u, a), \mathcal{I_E}(e_4) = (a, y) $ and $ \mathcal{I_T}(e_3) = e_4$, then 
$\exists e_5, e_6 \notin$ {\it IE} $ : \mathcal{I_T}(e_5) = e_6, \mathcal{I_E}(e_5) = (u, b), \mathcal{I_E}(e_6) = (b, y)$, and the graph $G$ entails $G': IE' = IE \cup \{e_5, e_6\}$. 

\item rdfs9: if ($v$, rdf:type, $u$) and ($u$, rdfs:subClassOf, $x$) then ($v$, rdf:type, $x$).\\ A resource as a member of one class is also a member of its superclass.\\
if $\exists e_1, e_2, e_3, e_4 \in$ {\it IE} in $G : \mathcal{I_T}(e_1) = e_2, \mathcal{I_T}(e_3) = e_4, \\\mathcal{I_E}(e_1) = (u, $ rdfs:subClassOf$^\mathcal{I}), \\\mathcal{I_E}(e_2) = ($rdfs:subClassOf$^\mathcal{I}, v), \\\mathcal{I_E}(e_3) = (v, $ rdfs:subClassOf$^\mathcal{I}), \\\mathcal{I_E}(e_4) = ($rdfs:subClassOf$^\mathcal{I}, x) $ \\ then
$\exists e_5, e_6 \notin$ {\it IE:} $\mathcal{I_T}(e_5) = e_6, \\\mathcal{I_E}(e_5) = (u, $ rdfs:subClassOf$^\mathcal{I}), \\\mathcal{I_E}(e_6) = ($rdfs:subClassOf$^\mathcal{I}, x)$,\\ and the graph $G$ entails $G': IE' = IE \cup \{e_5, e_6\}$. 
\end{itemize}

\section{Implementation}
\label{implementation}
Here we describe the implementation of the two engines in order to demonstrate the feasibility of practical application of our graph model. Since we are proposing a new graph model with a new way of traversing the graph, our aim is not to optimize the shortest path algorithm because there is a multitude of existing solutions for this purpose, such as \cite{Angles:2008:SGD:1322432.1322433}. Instead, our aim is to implement the two traversal algorithms from two graph models on the same system so that we can fairly evaluate the two graph models. We also explain how existing systems can adopt this model by making changes to their traversal algorithm. 

\subsection{Shortest Resource Path}
\label{shortest-resource-path}
\begin{algorithm}
\caption{Dijkstra's algorithm on the LDM-3N graph}\label{alg:dijkstraldm}
\begin{algorithmic}[1]
\Procedure{LDM-3N\_Dijkstra}{$source, target$}
\State $distance \gets 0$
\State $pqueue \gets source$
\While{$(curid \gets pqueue\_pop(pqueue, dis))\not=0$}
 \If{(curid = target)} 
\State{\textbf{return} dis} 
\Else
\State{get all the pairs <pred, obj> of curid} 
\For{each pair <pred, obj>} 
\State{$ret \gets update\_node(pred, dis+1, curid)$}
 \If{(ret = 1)} 
\State{pqueue\_push(pqueue, pred, dis + 1)}
\EndIf
\State{\textbf{end}}
\State{$ret \gets update\_node(obj, dis+2, pred)$}
 \If{(ret = 1)} 
\State{pqueue\_push(pqueue, obj, dis + 2)} 
\EndIf
\State{\textbf{end}}
\EndFor
\State{\textbf{end}}
\EndIf
\State{\textbf{end}}
\EndWhile\label{euclidendwhile}
\State{\textbf{end}}
\EndProcedure
\State{\textbf{end}}
\end{algorithmic}
\end{algorithm}

We implemented Dijkstra's algorithm on both NLAN and LDM-3N graph models to find the shortest resource path as defined in Section \ref{traversing-graph}. One difference between the two algorithms is that traversing the LDM-3N graph visits the predicates and explores their neighbor nodes while traversing the NLAN does not.

While traversing the graph, we use an in-memory priority queue to store all the nodes to be visited and their distance to the source node. We also use a hash table \verb|visited| to store the set of visited nodes. For each visited node, we keep track of the previous node-id and the shortest distance to the source node. Here we explain our traversal algorithms on both graph models.

Algorithm \ref{alg:dijkstraldm} starts with the source pushed into the priority queue \verb|pqueue|. At each step, one specific node \verb|curid| with its distance \verb|dis| to the source is removed from the priority queue and explored. All the pairs <pred, obj> are looked up from the hash index. Each \verb|pred| (with distance \verb|dis + 1| and \verb|curid| as previous node) or \verb|obj| (with distance \verb|dis + 2| and \verb|pred| as previous node) will be updated in the hash table \verb|visited|. If the \verb|pred| or \verb|obj| hasn't been visited before, or its new distance is shorter than the one in the hash table, the hash table \verb|visited| will be updated with the new distance for that node, and the \verb|curid| will also be pushed into the \verb|pqueue|.  This process is repeated until the target is found or the \verb|pqueue| becomes empty, which means no path exists between the source and the target.

\begin{algorithm}
\caption{Dijkstra's algorithm on the NLAN graph}\label{alg:dijkstranlan}
\begin{algorithmic}[1]
\Procedure{NLAN\_Dijkstra}{$source, target$}
\State $distance \gets 0$
\State $pqueue \gets source$
\While{$(curid \gets pqueue\_pop(pqueue, dis))\not=0$}
 \If{(curid = target)} 
\State{\textbf{return} dis} 
\Else
\State{get all the pairs <pred, obj> of curid} 
\For{each pair <pred, obj>} 
\State{$ret \gets update\_node(obj, dis + 1, curid)$}
 \If{(ret = 1)} 
\State{pqueue\_push(pqueue, obj, dis + 1)} 
\EndIf
\State{\textbf{end}}
\EndFor
\State{\textbf{end}}
\EndIf
\State{\textbf{end}}
\EndWhile\label{euclidendwhile}
\State{\textbf{end}}
\EndProcedure
\State{\textbf{end}}
\end{algorithmic}
\end{algorithm}

Dijkstra's algorithm \ref{alg:dijkstranlan} for the NLAN model differs from the Dijkstra's algorithm \ref{alg:dijkstraldm} for the LDM-3N model in that it does not visit the predicate as explained from lines 10-13 of Algorithm \ref{alg:dijkstraldm}. Instead, it only traverses from the \verb|curid| to the \verb|obj| and the distance \verb|dis + 1|. We can observe that the distance associated with the \verb|obj| in this model is shorter than the one associated with the \verb|obj| in the LDM-3N model. That is because the NLAN model traverses from \verb|sub| to \verb|obj| in one hop, while the LDM-3N model traverses from \verb|sub| to \verb|obj| in two hops, with \verb|pred| in the middle.

\subsection{GraphKE}

We implemented GraphKE on top of Berkeley DB (BDB) key-value store in C language \cite{olson1999berkeley}. This implementation can also be adapted to other key-value stores. 

\textbf{Dictionaries}. All the URIs and literals are mapped to internal identifiers using 8 bytes. We use odd numbers for literal identifiers and even numbers for the rest. As a literal cannot be subject of any triple, it will never get explored when traversing the RDF graph. We mark them as a sink node with odd numbers. We leave the URIs and the literals as they are. We did not compress them although compressing the strings may help save space. We created two dictionaries using either the hash index or the B-tree index supported by BDB. From our initial evaluation we observed that the hash index performed better than the B-tree index. However, this may not be the case for other datasets. Therefore, we do not fix the use of the hash index for all datasets. The type of index to be used can be specified before loading data.

\textbf{Data triples}. Each triple is internally represented in the form of <subject-id, predicate-id, object-id>. We loaded the triples into a hash index with subject-id as the key and a pair of <predicate-id, object-id> as the value. We also created an extra index with subject-id as the key and the number of <predicate-id, object-id> pairs as the value. 

\textbf{Basic graph operator}. In order to support the LDM-3N graph traversal, given a node-id, we need to lookup for all of the adjacent nodes. We use a database cursor to iterate through the data items to find all the pairs (predicate-id, object-id).





\section{Evaluation}
\label{evaluation}
This section describes the empirical evaluation on our graph models through their shortest path algorithms described in Section \ref{shortest-resource-path}.

We deployed the two engines GraphKE and RDF-3X into the same server running Ubuntu 12.04.4 LTS with 256 GB of RAM and 220 GB of SSD hard drive. Since RDF-3X is self-tuned, no configuration is necessary. For Berkeley DB, we set the cache size to 64GB in the configuration.

\subsection{Dataset}
We use the YAGO2S-SP\footnote{http://wiki.knoesis.org/index.php/Singleton\_Property} dataset generated by Nguyen et al. \cite{Nguyen:2014:DLR:2566486.2567973}. This dataset was chosen because it is suitable for our purpose with 62 million singleton properties representing the facts that are attached with different kinds of metadata. 
We load this dataset into the two engines, excluding the file WikipediaInfo.ttl. This file contains triples with the \verb|linksTo| predicate which does not provide a meaningful relationship between the resources. After removing this file, the dataset contains 267,161,278 triples with 77,895,604 URI nodes and 31,110,161 literals. We loaded this dataset into both GraphKE and RDF-3X. 

In order to evaluate our graph traversal algorithms, we need to run them with varying lengths. We are able to find resource paths of distances up to 139 in our experiments. Here we explain how we randomly generated different sets of <source, target> input pairs for the path algorithms from querying the Yago2S-SP dataset. 

We observe that the singleton properties of the property \verb|holdsPoliticalPosition| and the property \verb|hasSuccessor| form very long paths. The basic pattern is similar to the ($T_1$, $T_2$, $T_3$) of the motivating example, but the pattern is repeated multiple times and forms a much longer path. 

\begin{lstlisting}[captionpos=b,label=SPARQL query for generating input,
   basicstyle=\ttfamily \small]
BASE <http://yago-knowledge.org/resource/>
SELECT ?politician
WHERE {
 ?politician ?sp1 ?position .
 ?sp1 rdf:singletonPropertyOf 
 		<holdsPoliticalPosition> . 
}
GROUP BY ?position
\end{lstlisting}

We run the SPARQL query above to find the set of politicians and we group these politicians by their political positions. We then choose three political positions that have a good number of politicians: \\(1) White House Chief of Staff (CS),\\(2) Secretary of State (SS), and \\(3) Speakers of the U.S. House of Representatives (HR). For each group of politicians, we generate a set of <politician1, politician2> and <politician2, politician1> pairs, and we use three sets for the evaluation of the reachability and shortest path queries.

\begin{table}
\centering
\caption{Number of politicians and pairs for each input group
\label{tbl-input-groups}}
\begin{tabular}{clll } \noalign{\smallskip}\hline \noalign{\smallskip}
 \textbf{$Group$} & \textbf{$N_o$ of politicians} & \textbf{$N_o$ of pairs} \\ \hline \noalign{\smallskip}
 \bf{HR} & 51 & 2550 \\ 
 \bf{SS} & 34 & 1122 \\ 
 \bf{CS} & 22 & 462 \\ 
 \hline \noalign{\smallskip}
 \end{tabular}
\end{table}

\subsection{Reachability Queries}

For reachability queries, we adjusted Dijkstra's algorithms described in Section \ref{shortest-resource-path}. For every pair <source, target>, if the algorithm starts with the source and finishes exploring all the nodes in the priority queue without reaching the target, we report that the target is not reachable from the source.

Given the three sets of input pairs generated in the manner explained above, we run both NLAN and LDM-3N algorithms on the GraphKE to find the set of pairs that are reachable. Since we are dealing with thousands of pairs, we expedite the process by running the algorithms on multiple threads. Running the LDM-3N with 1 thread took 9,848 seconds (2.73 hrs) to finish processing 1,122 input pairs in the group CS. When we ran it with 5 threads, it only took 3,230 seconds (54 minutes) and averaged 7.2 seconds per pair. We report the experiment with 5 threads in Table \ref{tbl-reachable-groups}. Two observations are noted from this evaluation.

First, by traversing the LDM-3N graph one can find hundreds of reachable pairs, while one traversing the NLAN graph does not find any from all three input groups. For example, out of 462 pairs from 22 politicians holding the position of White House Chief of Staff, 164 pairs are reachable in the LDM-3N. Overall, out of 4,134 pairs from 3 position groups, 837 pairs are reachable.

Second, the algorithm running on the LDM-3N takes about 6 seconds on average, while the one running on the NLAN takes about 2 seconds. Running on the NLAN graph is 3 times faster than the algorithm running on the LDM-3N graph. This result is straightforward and confirms our hypothesis that the NLAN should be faster because it only traverses a subset of nodes that are visited by the LDM-3N. It does not traverse to the predicates and their neighbors. As a consequence, NLAN misses all of the paths connected through singleton properties as predicates, which is a big loss since this dataset contains about 67 million singleton properties.

\begin{table}
\centering
\caption{Number of reachable pairs (R) and time taken (T) per group in the LDM-3N model and the NLAN model, ran in 5 threads in GraphKE \label{tbl-reachable-groups}}
\begin{tabular}{clllllll } \noalign{\smallskip}\hline \noalign{\smallskip}
 \multicolumn{2}{c|}{} & \multicolumn{3}{c|}{\textbf{LDM-3N}} & \multicolumn{3}{c|}{\textbf{NLAN}}\\  \noalign{\smallskip}
 & \textbf{Pairs} & \textbf{R} & \textbf{T(s)} & \textbf{Avg} & \textbf{R} & \textbf{T(s)} & \textbf{Avg}\\ \hline \noalign{\smallskip}
 \bf{HR} & 2550 & 579 & 14782 & 5.797 & 0 & 5592 & 2.19 \\ 
 \bf{SS} &1122 & 94  & 7721 & 6.881 & 0 & 2419 & 2.155 \\ 
 \bf{CS}  &462 & 164  & 3230 & 7.186 & 0 & 885 & 1.915 \\ 
  \hline \noalign{\smallskip}
\bf{All}  &4134 & 837  & 25733 & \bf{6.23} & 0 & 8896 & \bf{2.15} \\ 
 \hline \noalign{\smallskip}
 \end{tabular}
\end{table}

\subsection{Shortest Resource Path Queries}

Getting the results from the reachability queries described above, we collected one subset of reachable pairs from each input group and used them as the input for the shortest path algorithms. We ran the shortest path algorithms on both GraphKE and RDF-3X in single-threaded mode. For each input pair, we printed out the shortest resource path and its associated triple path. We ran each set of reachable pairs three times on GraphKE and two times on RDF-3X, and calculated the average of these runs. Afterwards, since each reachable input group contains a number of paths sharing the same distance, we also obtained the average time for each shortest distance within each group, and reported them in Figures \ref{shortest-results-cs}, \ref{shortest-results-ss}, and \ref{shortest-results-hr} for three groups CS, SS, and HR, respectively. Table \ref{tbl-multithread-groups} summarizes the total time and average time taken in seconds in the LDM-3N model running in 1 thread and 5 threads in GraphKE, and 1 thread in RDF-3X.

\begin{figure}[h!]
 \centering
 \includegraphics[width=0.48\textwidth,natwidth=983,natheight=514]{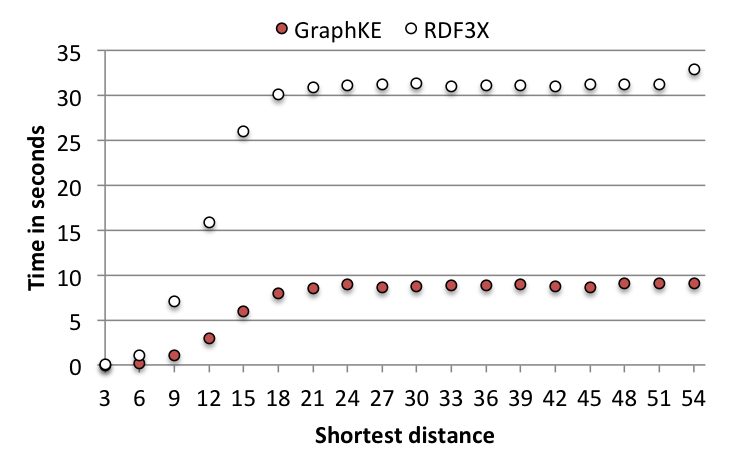}
 \caption{Average time taken per shortest distance within CS. \label{shortest-results-cs}}
\end{figure}
\begin{figure}[h!]
 \centering
 \includegraphics[width=0.48\textwidth,natwidth=983,natheight=514]{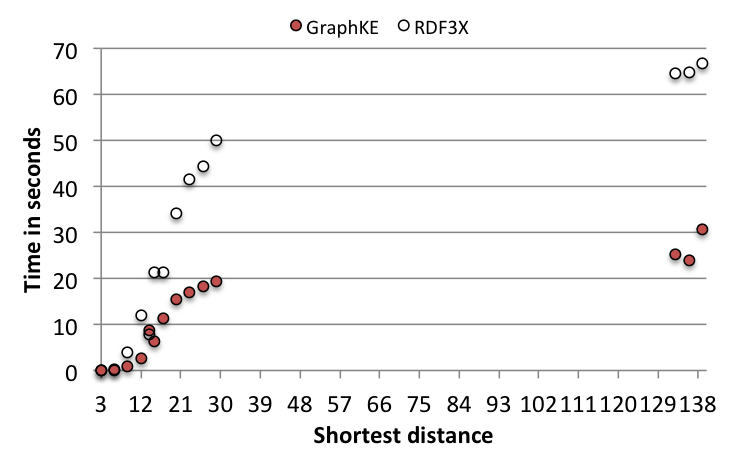}
 \caption{Average time taken per shortest distance within SS. \label{shortest-results-ss}}
\end{figure}
\begin{figure}[h!]
 \centering
 \includegraphics[width=0.48\textwidth,natwidth=752,natheight=453]{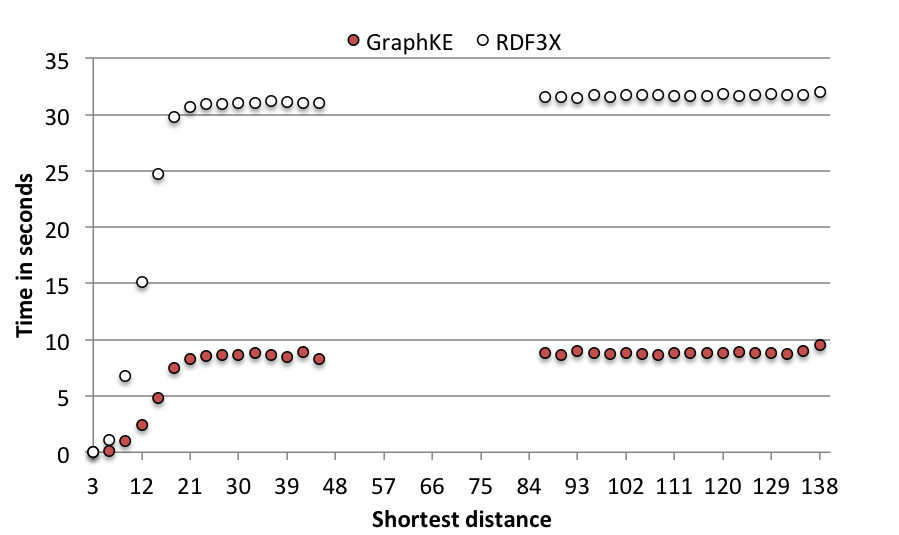}
 \caption{Average time taken per shortest distance within HR. \label{shortest-results-hr}}
\end{figure}

Regarding our choice of reporting on RDF-3X in addition to GraphKE, we want to clarify that our focus is on demonstrating the feasibility of meeting practically useful application needs by extending an existing system, as well as developing a new system based on the new model we defined. Although we use the same plots to show the results for two alternatives, the purpose is not to directly compare the two.

\textbf{GraphKE.} Our new implementation on GraphKE runs with 1 thread in 353 sec (6 mins), 770 sec (13 mins), 2,688 (45 mins) for 94 (SS), 164 (CS), and 579 (HR) inputs, respectively. It takes less than 10 seconds for every input pair in CS, HR, and for input pairs with shortest distances up to 15 in SS. For the shortest distances from 15 to 139 in SS, it takes up to 30 seconds. On average, it takes 4.553 seconds for an input with 1 thread as shown in Table \ref{tbl-multithread-groups}.

When we run the GraphKE with 5 threads, it finishes processing all input pairs in 2,582 seconds (44 minutes), compared to 3,811 seconds (64 minutes) in 1 thread. We observe that the time taken for each input pair increases in 5 threads, but the overall time taken for all input pairs reduces 30\% compared to 1 thread.
\begin{table}
\centering
\caption{Total time taken and average time taken in seconds in the LDM-3N model running in 1 thread vs. 5 threads in GraphKE, and 1 thread in RDF-3X \label{tbl-multithread-groups}}
\begin{tabular}{clllllll } \noalign{\smallskip}\hline \noalign{\smallskip}
 \multicolumn{2}{c|}{} & \multicolumn{4}{c|}{\textbf{GraphKE}} & \multicolumn{2}{c|}{\textbf{RDF-3X}}\\  \noalign{\smallskip}
 \multicolumn{2}{r}{Pairs} & \textbf{1T} & \textbf{$Avg_1$} & \textbf{5T} & \textbf{$Avg_5$} & \textbf{1T} & \textbf{$Avg_1$}\\ \hline \noalign{\smallskip}
 \bf{HR} & 579 & 2688 & 4.64 & 1826& 3.15&10627 & 18.35\\ 
 \bf{SS} & 94 & 353 & 3.76 & 237 &  2.52 & 888 & 9.45 \\ 
 \bf{CS}  & 164 & 770 & 4.695 & 519 & 3.17 & 2388 & 14.56\\ 
 \hline \noalign{\smallskip}
 \bf{All}  & 837 & 3811 & \bf{4.55} & 2582 & \bf{3.08} & 13903 & \bf{16.61}\\ 
 \hline \noalign{\smallskip}
 \end{tabular}
\end{table}



\textbf{Findings.} While investigating the results from our algorithm, we found many paths that are interesting to us. We draw one sample path in Figure \ref{interesting_paths}. This LDM-3N graph illustrates the use of singleton properties in representing the n-ary relationship between the politician, the position, and the successor. The politicians are connected to their position and successor of that position by a set of singleton properties represented by the rounded squares.
\begin{figure}[h!]
 \centering
 \includegraphics[width=0.50\textwidth,natwidth=983,natheight=514]{./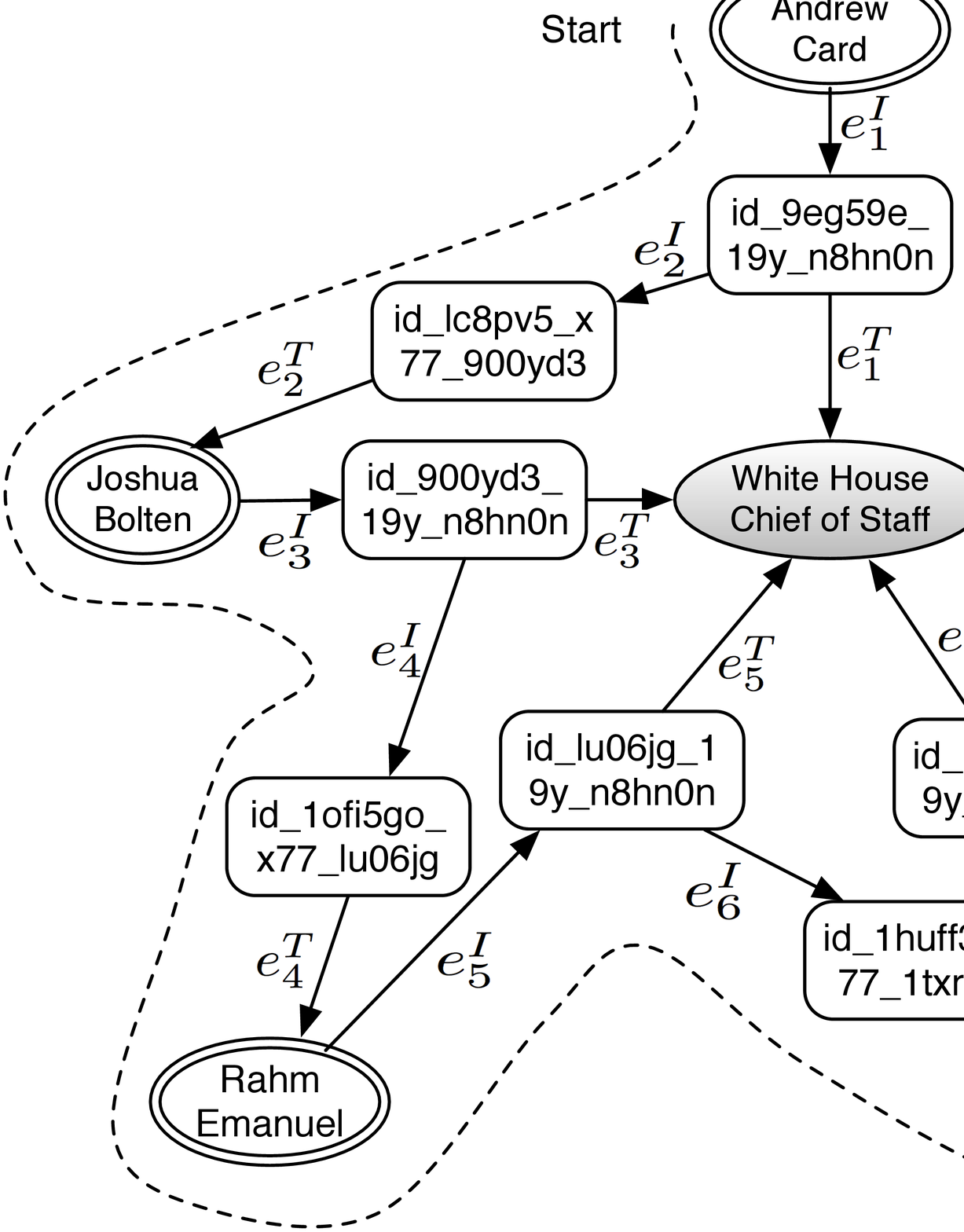}
 \caption{The shortest resource path and its respective triple path from Andrew Card to Jacob Lew in Yago2S-SP. The resource path includes all the edges along the dashed line in the periphery of the figure, and the triple path includes all the triples where three nodes from a triple form a straight line. Traversing the resource path from start to end will find a sequence of five politicians (Andrew Card, Joshua Bolten, Rahm Emanuel, Pete Rouse, William M. Daley, Jacob Lew) in consecutive terms.\label{interesting_paths}}
\end{figure}

This graph results from running Dijkstra's algorithm on the LDM-3N graph model to find the shortest path from Andrew Card to Jacob Lew. The shortest distance is 15. Starting from Andrew Card, we follow the outer edges along the dashed line in 15 hops, we will find a sequence of politicians Joshua Bolten, Rahm Emanuel, Pete Rouse, William M. Daley and then we will reach Jacob Lew. For every two politicians in the sequence, the former is succeeded by the latter.


\subsection{Discussion}

We defined a new formal graph model for RDF that does not suffer from limitations of current alternative. To demonstrate viability for practical implementation and use, we showed two implementations - GraphKE as a ground up implementation and adaptation of RDF-3X, for executing reachability and shortest path queries. Based on the experimental results in the Yago2S-SP dataset with 260 million triples, we believe that the LDM-3N graph model can be effectively implemented in a practical system, either new or existing.

\section{Related Work}
\label{related-work}
Directly related to the formal graph model for RDF are the NLAN diagram, which is currently used in the W3C Recommendation documents, and the bipartite model. As explained in prior sections, our approach differs from the NLAN approach in that predicates are mapped to nodes instead of arcs, and the nodes within the same triple are interconnected by a pair of initial and terminal edges. Our model differs from the bipartite model in that we use a pair of directed edges, whereas the bipartite model uses one extra node and three extra edges to connect this node to subject, predicate, and object.

READ MORE ABOUT CONCEPTUAL GRAPH \footnote{http://www.w3.org/DesignIssues/CG.html}

In a broader context, a comprehensive collection of graph database models are described in the survey paper by Angles and Gutierrez \cite{Angles:2008:SGD:1322432.1322433}. The formal foundation of these graph models varies based on the basic definition of a mathematical graph, such as directed or undirected, labeled or unlabeled, graph or hypergraph, and node or hypernode. These models differ from our model in that they all represent predicates as labeled arcs, while we map them to nodes. From this perspective, the intuition of our approach is similar to that of conceptual graphs (CGs) \cite{sowa1998conceptual} which also models relations as nodes. Our approach, however, differs from the CGs in the mechanism to form a statement or formula. We map one relation to a single node and add constraints to bind every two edges in order to form a statement. CGs allow the same relation to be mapped to multiple nodes when the same relation occurs in multiple formulas. 

\section{Ongoing and Future Work}

\label{future}
\textbf{Ongoing.} We are participating on the ongoing community activities initiated by National Center for Biotechnology Information (NCBI) and Air Force Research Laboratory (AFRL) on applying the singleton property approach in creating RDF knowledge bases in the biomedical and material sciences.

\textbf{Future work.} A document of the LDM-3N model that fully supports the latest specifications of RDF syntax, semantics, and deduction rules is necessary (as we skipped some due to space constraints).
We believe that many applications in the Semantic Web areas such as graph databases, knowledge representation, graph theory, and logics could benefit from our LDM-3N graph model.

Adopting this LDM-3N model and the singleton property approach would allow more well-connected knowledge graphs that are aware of temporal, spatial, and provenance to be represented in RDF. For example, it is directly applicable to publishing, querying, and browsing the Linked Open Data (LOD). 
Well-connected knowledge graphs would be amenable to graph-based algorithms, which could be applied to both RDF data and schema triples. Moreover, standard algorithms well-studied from graph theory could be directly applied to these well-formalized graphs, while also leveraging the RDF(S) semantics. We are also interested in studying the possibility of performing graph entailments directly on the real graph structures. 

\section{Conclusion}

\label{conclusion}
We have presented a new formal graph model for RDF with examples clearly demonstrated. Our LDM-3N model allows us to represent any set of RDF triples in a formal graph. It also allows us to develop an underlying graph model for the model-theoretic semantics of RDF(S) and the RDFS entailments. To the best of our knowledge, this is the first formal graph model that is compatible with the RDF formal semantics. We have implemented the LDM-3N graph model in the new GraphKE engine and in the existing triple store RDF-3X. We have evaluated the empirical aspects of our graph model in the Yago2S-SP dataset to demonstrate its practical feasibility for handling real-world applications. 



\bibliographystyle{abbrv}
\bibliography{all,knowledge}
\additionalauthors{Author 4, Author 5}

\end{document}